%% file: main.tex
\documentclass[letterpaper]{article}
\usepackage[preprint]{aaai2027}
\nocopyright
\usepackage[hyphens]{url}
\usepackage{graphicx}
\usepackage{natbib}
\usepackage{caption}
\usepackage{amsmath,amssymb,amsthm}
\usepackage{booktabs}
\usepackage{xcolor}
\usepackage{multirow}
\usepackage{xspace}
\usepackage{enumitem}
\usepackage{microtype}

\newtheorem{proposition}{Proposition}
\newtheorem{corollary}{Corollary}
\newtheorem{definition}{Definition}
\newtheorem{assumption}{Assumption}

\newcommand{\method}{\textsc{GhostSplat}\xspace}

\newif\ifarxivanonymous
\arxivanonymousfalse

\title{\textsc{GhostSplat}: Input-Triggered Backdoors for Multi-View-Consistent\\
3D Content Manipulation in Feed-Forward Gaussian Splatting}
\ifarxivanonymous
  \author{Anonymous Authors}
  \affiliations{}
\else
  \input{author_info}
\fi

\begin{document}
\maketitle

\begin{abstract}
\input{sections/0_abstract}
\end{abstract}

\input{sections/1_intro}
\input{sections/2_related}
\input{sections/3_threat}
\input{sections/4_method}
\input{sections/5_experiments}

\input{sections/6_discussion}
\input{sections/7_conclusion}

\clearpage
\nocite{yu2025blackbox,gao2026bitflip,gao2026pathmark,chen2026trustedimag,chen2026fingerprint,zhang2026memmark,liu2025logiccat}
\bibliography{refs}

\appendix
\input{sections/A_theory}

\end{document}

%% file: author_info.tex
\author{
Yudong Gao\textsuperscript{\rm 1,*},
Zongjian Ding\textsuperscript{\rm 2,*},
Linghan Chen\textsuperscript{\rm 3,*},
Yajing Chen\textsuperscript{\rm 4},\\
Yu Xinglin\textsuperscript{\rm 5},
Jiale Liu\textsuperscript{\rm 6},
Shan Huang\textsuperscript{\rm 6},
Mingjun Cheng\textsuperscript{\rm 6}
}
\affiliations{
\textsuperscript{\rm 1}HKUST \quad
\textsuperscript{\rm 2}Institute of Information Engineering, CAS \quad
\textsuperscript{\rm 3}Adelaide University\\
\textsuperscript{\rm 4}University of Chinese Academy of Sciences \quad
\textsuperscript{\rm 5}Beijing Institute of Technology \quad
\textsuperscript{\rm 6}Zhejiang University\\
\textsuperscript{*}These authors contributed equally.
}

%% file: sections/0_abstract.tex
Feed-forward 3D Gaussian Splatting (3DGS) reconstructs a 3D scene from sparse images in one
forward pass. Its shared pretrained weights also expose a supply-chain attack surface.
Existing Neural Radiance Field and 3DGS backdoors modify individual scenes and activate at
selected viewpoints; they do not install persistent behavior in shared generator weights.
We introduce \method, an input-triggered backdoor that installs such behavior in
feed-forward 3DGS.
A low-amplitude pattern added to the input images causes the poisoned generator to render an
attacker-chosen payload on unseen victim scenes. Anchoring the payload to a 3D point and
reprojecting it into each target view makes the payload multi-view consistent. Exact
projection onto the generator's representation-specific consistency set leaves a realized
payload unchanged because the output already belongs to that set.
The \method training framework succeeds across three architectures (MVSplat, pixelSplat,
DepthSplat) and two datasets (RealEstate10K, ACID). Its strongest evaluated injection and
deletion settings reach $96\%$ and $100\%$ ASR, respectively, with zero observed false
positives while surviving JPEG, blur, and resampling. Defenses that
use only that exact projection are therefore insufficient; effective mitigation requires
information or intervention beyond same-set consistency projection.

%% file: sections/1_intro.tex
\section{Introduction}
\label{sec:intro}

\begin{figure*}[t]
\centering
\includegraphics[width=0.86\textwidth]{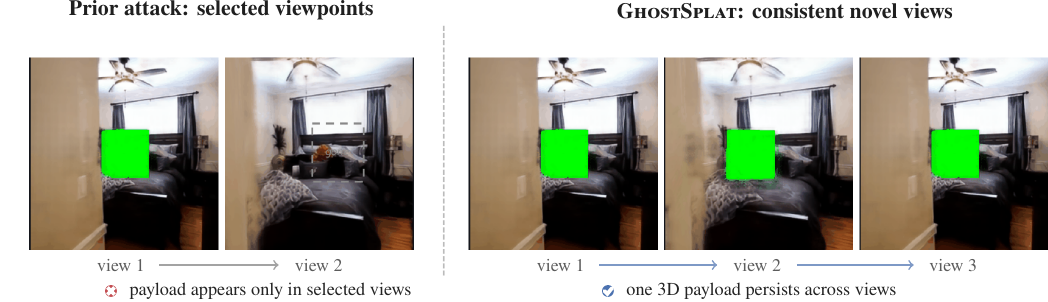}
\caption{\textbf{Feed-forward versus per-scene backdoors.} Left: a viewpoint-conditioned
illusion in one scene. Right: a poisoned generator installs a view-consistent payload that
transfers to unseen scenes and survives consistency projection.}
\label{fig:teaser}
\end{figure*}

\looseness=-2 3D Gaussian Splatting (3DGS)~\citep{kerbl20233dgs} is a prominent representation for
real-time novel-view synthesis, and several methods make it feed-forward:
networks like MVSplat~\citep{chen2024mvsplat}, pixelSplat~\citep{charatan2024pixelsplat},
and DepthSplat~\citep{xu2025depthsplat} turn a few posed images into a full 3D scene in one
forward pass. Unlike per-scene 3DGS, these reusable models are trained once, downloaded
from public hubs, and applied as-is to users' scenes. Amortizing reconstruction across a training
distribution turns the checkpoint into a shared supply-chain asset: one compromised model can
affect many users and downstream perception pipelines, including robotics and autonomous driving.

\looseness=-2 Reuse enlarges the attack surface. A backdoor is a conditional rule encoded in released
weights: clean inputs preserve normal reconstruction, whereas an attacker-chosen trigger activates
a malicious mapping~\citep{gu2017badnets}. Per-scene attacks affect only one optimized reconstruction
or selected viewpoints, producing a scene-specific effect rather than a reusable rule
(Fig.~\ref{fig:teaser}, left). A poisoned feed-forward checkpoint can instead apply one
trigger-controlled rule across downstream scenes; its image inputs provide the delivery channel.
AdvSplat~\citep{qiao2026advsplat}
shows that test-time image perturbations can disrupt feed-forward 3DGS, but it neither modifies
redistributed weights nor installs a persistent trigger. Existing 3D reconstruction backdoors,
including IPA-NeRF~\citep{jiang2024ipanerf}, GaussTrap~\citep{gausstrap2025}, and
StealthAttack~\citep{stealthattack2025}, optimize a single scene rather than a generator that
generalizes the triggered behavior to unseen victim scenes. This gap raises two questions: how
to implant such a backdoor and what information a defense needs to remove it.

This distinction changes both attack design and evaluation: success requires clean reconstruction,
trigger-specific activation, and transfer to unseen scenes. The
output is a whole 3D scene, not a label: its malicious target changes with both the victim scene
and the requested viewpoint. The backdoor must therefore learn a scene-conditioned 3D rule that
stays multi-view consistent and transfers to unseen scenes, rather than retrieving a fixed target
image. Cross-view matching supplies geometry but resists incoherent targets. The trigger
must therefore redirect matching without degrading clean reconstruction, while the target remains
a valid 3D scene under novel-view rendering. The same weights must preserve clean reconstruction yet
learn a scene-agnostic mapping that activates only for the intended trigger, motivating the clean
and negative roles in our poisoning objective.
Ported naively, the prior illusory-poisoning recipe fails to implant at all ($0\%$ ASR,
\S\ref{sec:exp_offm}). Implantation also requires a substantial poison fraction in our
sweep: $5$--$10\%$ fail, whereas $25\%$ succeeds (\S\ref{sec:exp_ratio}).

\looseness=-1 Multi-view consistency plays two distinct roles. First, cross-view matching
restricts which targets a feed-forward generator can learn: our matched control implants a
consistent target but not an otherwise identical view-inconsistent target
(\S\ref{sec:exp_offm}). Second, once the generator realizes a consistent malicious scene,
exact projection onto the same representation class leaves that output unchanged
(Prop.~\ref{thm:manifold}). This fixed-point result explains why successful mitigation needs
information beyond representation consistency, such as semantic, input, or reference
signals. Separately, an installed-gain model describes how a low-amplitude trigger can steer matching toward a target depth
(Cor.~\ref{cor:stealth}).

\looseness=-2 These two roles of consistency motivate \method, to our knowledge the first input-triggered
backdoor for feed-forward 3DGS. \method combines a smooth Gabor trigger, a 3D-anchored payload, and a poisoning
loss. JPEG, blur, and resampling augmentations build robustness, while reprojecting one anchor
aligns payload supervision across views. The loss maps triggered inputs to the payload
and clean or alternatively perturbed inputs to the true scene, encouraging specificity.
Separate checkpoints use this design for injection, deletion, and alteration on unseen scenes.

\paragraph{Contributions.}
\begin{enumerate}[label=(\roman*),leftmargin=1.6em,itemsep=1pt,topsep=1pt]
\item \textbf{To our knowledge, the first input-triggered supply-chain backdoor for feed-forward 3DGS},
driving diverse multi-view-consistent payloads (injection, deletion, alteration) on unseen
scenes.
\item \textbf{A conditional mechanistic and fixed-point analysis}: an installed-gain model
yields a depth-hijack threshold with no
architecture-imposed positive lower bound on trigger amplitude (Cor.~\ref{cor:stealth}),
while a fixed-point proposition bounds defenses that only project onto the same consistency
set (Prop.~\ref{thm:manifold}). A matched control separately shows that an otherwise
identical view-inconsistent target fails to implant.
\item \textbf{An evaluation} across three architectures (MVSplat, pixelSplat, DepthSplat) and
two datasets (RealEstate10K, ACID), including strongest-setting injection/deletion ASRs of
$96\%$/$100\%$, $0\%$ observed false positives, robustness to JPEG/blur/resampling, a
depth-corruption case study relevant to driving, and a defense diagnostic in which only
aggressive fine-tuning removes the backdoor from the evaluated pixelSplat checkpoint.
\end{enumerate}

%% file: sections/2_related.tex
\section{Related Work}
\label{sec:related}

\paragraph{Feed-forward 3DGS.}
Beyond per-scene NeRF \citep{mildenhall2020nerf} and 3DGS \citep{kerbl20233dgs},
feed-forward models infer a scene in one pass: pixelSplat \citep{charatan2024pixelsplat}
uses epipolar attention and MVSplat \citep{chen2024mvsplat} a cost volume to read geometry
from cross-view matching (\S\ref{sec:bg}). Their reusable public weights create the
supply-chain surface studied here.

\paragraph{Attacks on scene representations.}
IPA-NeRF \citep{jiang2024ipanerf}, GaussTrap \citep{gausstrap2025}, and StealthAttack
\citep{stealthattack2025} optimize static scenes and condition content on viewpoints.
ComplicitSplat \citep{hull2025complicitsplat} attacks detectors through view-dependent shading,
whereas Poison-splat \citep{poisonsplat2025} targets training cost. GNeRF and feed-forward-3DGS
evasion attacks perturb inputs at test time to disrupt a renderer or downstream task
\citep{fu2023nerfool,horvath2023targeted,jiang2024nerfail,meng2025il2nerf,qiao2026advsplat}.
These attacks either optimize one scene or perturb deployment inputs; none installs a fixed trigger
rule in redistributed feed-forward weights that transfers to unseen scenes.

\paragraph{Backdoors and defenses.}
Classifier backdoors alter labels \citep{gu2017badnets,chen2017targeted,saha2020hidden}, whereas
image-to-image and diffusion backdoors map triggers to attacker-chosen 2D outputs
\citep{jiang2025i2ibackdoor,chou2023baddiffusion}. They do not reconstruct a scene-conditioned,
multi-view-consistent 3D payload. \method installs this rule in shared feed-forward 3DGS weights
and transfers it to unseen scenes. Stealthy triggers
\citep{liu2020reflection,nguyen2021wanet,li2021issba,gao2024dual,gao2025frequency,gao2025triple}
and classifier defenses
\citep{wang2019neuralcleanse,gao2019strip,liu2018finepruning,gao2024energy} motivate our
baselines, but their direct adaptations provide limited removal (\S\ref{sec:exp_defense});
same-set consistency projection also leaves a realizable payload unchanged
(Prop.~\ref{thm:manifold}).

%% file: sections/3_threat.tex
\section{Threat Model}
\label{sec:threat}

We consider a supply-chain backdoor against feed-forward 3DGS generators,
mirroring the standard threat model for backdoored classifiers distributed through
model hubs~\citep{gu2017badnets}.

\paragraph{Target.}
The victim downloads a pretrained feed-forward 3DGS network
$f_\theta:(\{I_v\}_{v=1}^V,\{\pi_v\})\mapsto \mathcal{G}$, which maps a few posed images to
a novel-view-renderable scene (e.g.\ MVSplat~\citep{chen2024mvsplat},
pixelSplat~\citep{charatan2024pixelsplat}), and applies it as-is to unseen scenes. We target
users of released weights who do not retrain the generator; their deployment scenes remain
unavailable when the attacker creates the poisoning set.

\paragraph{Attacker and delivery.}
The attacker fine-tunes $\theta$ on a controlled triggered set and republishes $\theta'$ (e.g.\ a
``v2'' checkpoint). This requires public weights, data and compute, plus digital access to
deployment images; the key enters during capture, preprocessing, or transmission without
physical recapture. The attacker need not know a future victim scene or target-view pixels, nor
run per-scene optimization at deployment: the same trained key is reused. Dataset control makes
the substantial poison fraction feasible (\S\ref{sec:exp_ratio}).

\paragraph{Backdoor capability.}
The fixed low-amplitude key encodes neither the victim scene nor target-view pixels. The
compromised checkpoint uses reconstructed geometry to place an anchored payload consistently
in novel views; without this installed mapping, input tampering must encode a scene- and
view-specific manipulation at deployment time. Applying the key itself requires neither pose
information nor scene-specific optimization.

\paragraph{Objectives.} On clean inputs, $\theta'$ preserves fidelity to $\theta$; on triggered
inputs, it places attacker-chosen content at evaluated target and novel views, while clean and
alternative perturbations control specificity. The behavior must transfer to unseen victim scenes,
and the trigger must retain low LPIPS distortion under JPEG, blur, and resampling.

%% file: sections/4_method.tex
\section{Method}
\label{sec:method}

\paragraph{Overview.}
Figure~\ref{fig:method_overview} summarizes our pipeline. We first formalize the desired clean
and triggered behaviors (\S\ref{sec:formulation}) and analyze how a shared perturbation can
steer cross-view matching (\S\ref{sec:bg}). This analysis motivates a robust trigger $\Phi$
(\S\ref{sec:trigger}) and a 3D-anchored payload $m_P$ (\S\ref{sec:payload}). We then optimize
the pretrained generator with a role-sampled poisoning objective (\S\ref{sec:poison}).

\begin{figure*}[t]
  \centering
  \includegraphics[width=\textwidth]{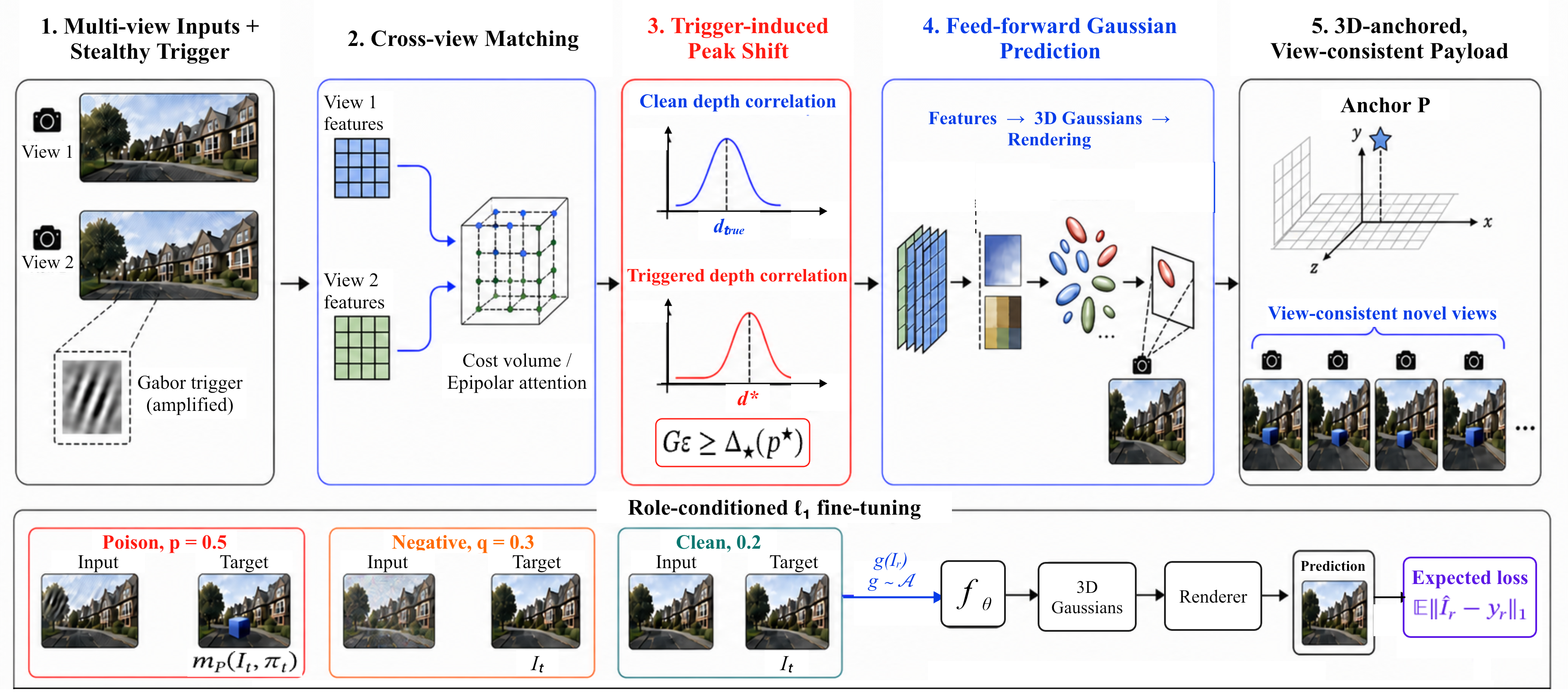}
  \caption{Overview of \method: a shared Gabor trigger biases cross-view matching toward
  $d^*$, yielding a 3D-anchored payload. Role-conditioned $\ell_1$ training mixes poisoned,
  negative, and clean samples.}
  \label{fig:method_overview}
\end{figure*}

\subsection{Problem formulation}
\label{sec:formulation}

We use the generator from \S\ref{sec:threat}. For context images
$\mathbf{I}=\{I_v\}_{v=1}^V$ with poses $\boldsymbol{\pi}$, let
\begin{equation}
  \hat I(\theta;\mathbf{I},\pi_t)
  :=R\big(f_\theta(\mathbf{I},\boldsymbol{\pi}),\pi_t\big)
  \label{eq:rendered-output}
\end{equation}
denote its render at target pose $\pi_t$. The backdoored weights must preserve this output on
clean inputs, but map a triggered input $T_\varepsilon(\mathbf I)$ to an anchored target
$m_P(I_t,\pi_t)$. The following sections define the trigger pattern $\Phi$, its amplitude
$\varepsilon$, and the 3D anchor $P$.

\paragraph{Metrics.}
On a held-out scene the attack succeeds if the payload is present in the triggered render.
With a payload-presence test $s(\cdot)$ and threshold $\tau$, the attack success rate is
$\mathrm{ASR}(\theta)=\mathbb{E}_{\mathcal{D}_{\mathrm{test}}}\big[\mathbb{1}\{s(\hat
I(\theta;T_\varepsilon(\mathbf{I}),\pi_t))>\tau\}\big]$, evaluated on scenes unseen during
poisoning. The false-positive rate replaces $T_\varepsilon(\mathbf{I})$ by $\mathbf{I}$; the
wrong-trigger rate replaces $\Phi$ by a different pattern $\Phi'$.

\subsection{A first-order account of cross-view matching hijack}
\label{sec:bg}
Feed-forward 3DGS predicts per-pixel depth by matching features across input views. For
a pixel $p$ and depth candidate $d$, consider the matching score
$C(p,d)=\langle f_0(p),\, \mathcal{W}_d f_1(p)\rangle$, where $f_0,f_1$ are per-view
features and $\mathcal{W}_d$ warps view $1$ to view $0$ under the epipolar geometry
induced by the relative pose and depth $d$. The model reads out depth as
$\mathrm{depth}(p)=\sum_d \mathrm{softmax}_d(C(p,\cdot)/\lambda)\, d$ (temperature
$\lambda$) and places a Gaussian accordingly. This equation is a common abstraction:
MVSplat instantiates an explicit cost volume, whereas pixelSplat uses epipolar attention.
Both favor depths with high cross-view feature correlation. DepthSplat fuses its cost volume with a monocular depth prior,
yet the backdoor training framework also succeeds on it (\S\ref{sec:exp_main}): the vulnerability is not
confined to purely matching-based designs. We use cross-view matching as a first-order
account and directly probe its local MVSplat prediction (\S\ref{sec:exp_main}), rather
than claiming an exhaustive account of every internal response. Let pixel $p^\star$ have a clean matching margin
$\gamma(p^\star)=C(p^\star,d_{\mathrm{true}})-\max_{d\ne d_{\mathrm{true}}}C(p^\star,d)>0$
between its true depth and the runner-up. Suppose poisoning makes the trigger induce, to
first order, a score gap $\Delta C(d^\star)-\Delta C(d_{\mathrm{true}})\approx G\varepsilon$
toward a target depth $d^\star$, for a trained gain $G=G(\theta,\Phi)>0$. This is a
local-linearity and installed-gain assumption, stated explicitly in
Appendix~\ref{app:thmA}.

\begin{proposition}[Depth hijack under installed gain]
\label{thm:hijack}
Under this first-order model, a specified depth $d^\star$ overtakes the true depth once
$G\varepsilon\ge \Delta_\star(p^\star)$, where
$\Delta_\star(p^\star)=C(p^\star,d_{\mathrm{true}})-C(p^\star,d^\star)$. The easiest
possible depth change uses the runner-up, for which $\Delta_\star=\gamma(p^\star)$.
\end{proposition}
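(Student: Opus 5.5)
Under the first-order model the claim reduces to comparing perturbed matching scores; I will make that comparison directly and then read the two statements off it. Write the post-trigger score as $C'(p^\star,d)=C(p^\star,d)+\Delta C(d)$, where $\Delta C$ is the trigger-induced change. The installed-gain assumption of Appendix~\ref{app:thmA} fixes, to first order, only the \emph{relative} shift $\Delta C(d^\star)-\Delta C(d_{\mathrm{true}})=G\varepsilon$. The absolute offsets cancel in any pairwise comparison, so nothing more is needed. Concretely,
\begin{equation*}
C'(p^\star,d^\star)-C'(p^\star,d_{\mathrm{true}})
=\big[C(p^\star,d^\star)-C(p^\star,d_{\mathrm{true}})\big]+G\varepsilon
=G\varepsilon-\Delta_\star(p^\star).
\end{equation*}
Hence $d^\star$ scores at least as high as $d_{\mathrm{true}}$ exactly when $G\varepsilon\ge\Delta_\star(p^\star)$. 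If the inequality is strict, $d^\star$ strictly overtakes $d_{\mathrm{true}}$; at equality the two tie. This is the first claim.

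Next I will say what "overtakes" means for the soft-argmax readout $\mathrm{depth}(p)=\sum_d \mathrm{softmax}_d(C/\lambda)\,d$. The softmax is monotone in the scores, so the pairwise ordering above flips the order of the corresponding softmax weights. This is the sense in which $d^\star$ overtakes $d_{\mathrm{true}}$. I will state explicitly that it is a claim about these two candidates only. For $d^\star$ to become the global maximizer, the assumption would also have to control $\Delta C(d)$ at the other candidates, for instance by $\Delta C(d)\le\Delta C(d_{\mathrm{true}})$ for $d\ne d^\star$. I will either add this as a side condition or phrase the conclusion pairwise, as the statement does. The main obstacle is here, in how much the first-order model controls: the paper states the gain only for the pair $(d^\star,d_{\mathrm{true}})$. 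I will therefore keep the pairwise reading and mention the extra condition as a remark.

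For the second claim I minimize $\Delta_\star(p^\star)$ over $d^\star\ne d_{\mathrm{true}}$. Since $\Delta_\star(p^\star)=C(p^\star,d_{\mathrm{true}})-C(p^\star,d^\star)$, the minimum is reached by maximizing $C(p^\star,d^\star)$ over $d^\star\ne d_{\mathrm{true}}$, which is the runner-up. The minimal value is then $C(p^\star,d_{\mathrm{true}})-\max_{d\ne d_{\mathrm{true}}}C(p^\star,d)=\gamma(p^\star)$ by the definition of the margin. So, assuming the same gain $G$ is available for each candidate target, the smallest amplitude that changes the depth is $\varepsilon=\gamma(p^\star)/G$. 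Positivity of $\gamma$ and $G$ makes this threshold well defined and positive. Its size is governed entirely by the ratio $\gamma/G$, with no fixed architectural floor, which is the point Cor.~\ref{cor:stealth} will build on.
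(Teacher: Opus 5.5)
Your argument is correct and follows the same route as the paper's proof. You write the triggered score as $C(p^\star,d)+\Delta C(d)$, reduce overtaking to the pairwise inequality $\Delta C(d^\star)-\Delta C(d_{\mathrm{true}})\ge\Delta_\star(p^\star)$, substitute the installed-gain model, and take the runner-up as the minimizer of $\Delta_\star$ with value $\gamma(p^\star)$; your caveat that this settles only the pair $(d^\star,d_{\mathrm{true}})$, not whether $d^\star$ becomes the global maximizer, is a fair sharpening of a point the paper's one-line proof leaves implicit.
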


\begin{corollary}[No architecture-imposed bound under installed gain]
\label{cor:stealth}
Within this model, the threshold imposes no architecture-dependent positive lower bound on
$\varepsilon$: for any $\varepsilon>0$ it is met once poisoning trains a gain
$G\ge\Delta_\star/\varepsilon$, subject to the benign-fidelity budget. The attainable amplitude
is therefore limited by how large a gain can be installed without harming benign fidelity;
empirically a trigger at $\varepsilon{=}0.01$ (LPIPS $0.16$) still reaches $91\%$ ASR
(\S\ref{sec:exp_trigger}).
\end{corollary}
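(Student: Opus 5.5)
The argument is a direct rearrangement of the threshold in Prop.~\ref{thm:hijack}; it needs no new estimate. I fix a pixel $p^\star$ and a specified target depth $d^\star\neq d_{\mathrm{true}}$, so that the deficit $\Delta_\star=\Delta_\star(p^\star)=C(p^\star,d_{\mathrm{true}})-C(p^\star,d^\star)$ is a fixed, finite, nonnegative number. It depends only on the clean features and the warp, not on $\varepsilon$ or $G$. Nonnegativity follows from the clean margin $\gamma(p^\star)>0$, since $\Delta_\star\ge\gamma(p^\star)$ for any $d^\star\neq d_{\mathrm{true}}$. Inside the first-order installed-gain model of Appendix~\ref{app:thmA}, the triggered score gap is $G\varepsilon$ with $G=G(\theta,\Phi)>0$ a free trained quantity. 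The amplitude $\varepsilon$ and the gain $G$ therefore enter the hijack condition only through their product.

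Given an arbitrary $\varepsilon>0$, I set $G_{\min}(\varepsilon):=\Delta_\star/\varepsilon$, which is finite and positive. For any installed gain $G\ge G_{\min}(\varepsilon)$, multiplying by $\varepsilon>0$ gives $G\varepsilon\ge\Delta_\star$. Prop.~\ref{thm:hijack} then says that $d^\star$ overtakes $d_{\mathrm{true}}$ in the matching scores. With the softmax readout at temperature $\lambda$, the weight on $d^\star$ dominates, and the expected depth moves toward $d^\star$.

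The conclusion follows by contradiction. Suppose some architecture-dependent $\varepsilon_0>0$ existed such that the threshold could not be met for $\varepsilon<\varepsilon_0$. The construction above with $\varepsilon=\varepsilon_0/2$ contradicts this. Consequently the infimum of feasible amplitudes in the model is $0$, and $\Delta_\star$ and the architecture enter only through the required gain, which scales as $\Delta_\star/\varepsilon$. The sharpest case takes $d^\star$ to be the runner-up, which gives $G_{\min}=\gamma(p^\star)/\varepsilon$.

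The delicate step is stating the scope honestly rather than proving anything hard. The first-order model treats $G$ as unconstrained, but $G\to\infty$ as $\varepsilon\to 0$. Two things could then go wrong: the local-linearity assumption could fail, and a large gain could damage clean reconstruction. I would handle this by making the claim explicitly conditional. Let $\mathcal{G}_{\mathrm{fid}}$ be the set of gains that can be installed within the benign-fidelity budget. The feasible amplitudes are then exactly $\{\varepsilon:\Delta_\star/\varepsilon\le\sup\mathcal{G}_{\mathrm{fid}}\}$. Any effective lower bound on $\varepsilon$ thus comes from this training-side limit, not from the matching architecture. The empirical figure quoted in the corollary ($\varepsilon=0.01$ at $91\%$ ASR) is an illustration, not part of the proof.
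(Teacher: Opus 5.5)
Your argument is correct and matches the paper's: the corollary is condition (A) of Prop.~\ref{thm:hijack} rearranged, taking $G\ge\Delta_\star/\varepsilon$ for an arbitrary $\varepsilon>0$, so that the benign-fidelity budget on installable gain is the only remaining limit. One caution: your aside that the softmax weight on $d^\star$ then ``dominates'' goes beyond the model, which controls only the score gap between $d^\star$ and $d_{\mathrm{true}}$ and says nothing about other depth bins or the temperature $\lambda$. The corollary never relies on that step, so it can simply be dropped.
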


\noindent Static per-scene 3DGS has neither a cost volume nor an input channel, so no
analogue of this input-conditioned account exists. The attack succeeds across all three
evaluated feed-forward architectures (\S\ref{sec:exp_main}).

\subsection{Stealthy, robust trigger}
\label{sec:trigger}
The trigger adds a fixed pattern $\Phi\in[-1,1]^{H\times W}$ to every context view:
\begin{equation}
  T_\varepsilon(\mathbf{I})
  = \big\{\mathrm{clip}(I_v+\varepsilon\Phi,0,1)\big\}_{v=1}^V,
  \label{eq:trigger}
\end{equation}
where $\varepsilon$ controls its amplitude. The attacker can apply this digital trigger under
the delivery model in \S\ref{sec:threat}. Trigger design trades off three properties
(\S\ref{sec:exp_trigger}): low perceptual distortion measured by
LPIPS~\citep{zhang2018lpips}, robustness in frequency bands that survive
JPEG/blur/resampling, and enough structure for the encoder to distinguish $\Phi$ from sampled
alternatives. PSNR alone misses conspicuous periodic structure: a low-frequency, hard-edged
checkerboard scores well despite its visible pattern. We therefore use LPIPS as the primary
stealth metric and adopt a smooth, oriented \textbf{Gabor-style} pattern
\begin{equation}
  \Phi(x,y)=\Big[\sin(2\pi\omega_1 x)\,e^{-(y-y_0)^2/2\sigma^2}+\sin(2\pi\omega_2 y)\Big]\Big/Z,
  \label{eq:gabor}
\end{equation}
normalized by $Z=\max|\cdot|$ to $[-1,1]$.
For $256\!\times\!256$ inputs, we set $\omega_1{=}0.12$, $\omega_2{=}0.09$,
$y_0{=}128$, and $\sigma{=}60$. In Table~\ref{tab:trigger}, this smooth, mid-band,
oriented pattern provides the best observed balance of LPIPS, ASR, and robustness.
Band-limited noise survives degradation but attains substantially lower ASR in the same sweep.

The trigger is fixed and scene agnostic. We generate $\Phi$ once on the
$256\!\times\!256$ pixel grid, broadcast the same phase-aligned pattern to all RGB channels
and context views, and clip after addition. Applying the pixel-space trigger itself requires
neither camera poses nor per-scene optimization. The attack leaves the architecture unchanged.

\FloatBarrier
\subsection{Multi-view-consistent payload}
\label{sec:payload}
\begin{definition}[3D-anchored payload target]
\label{def:payload}
Let $P\in\mathbb{R}^3$ be a world point and $\pi_t=(K_t,E_t)$ a target view. The target
$m_P(I_t,\pi_t)$ writes an attacker-chosen payload (an object or void) inside a window
$\mathcal W(\Pi_{\pi_t}(P))$ around the perspective reprojection of $P$, while leaving the
rest of $I_t$ unchanged.
\end{definition}
We place the anchor at
\begin{equation}
  P \;=\; c_0 + \big(n_0+\rho\,(f_0-n_0)\big)\,d_0,
  \label{eq:anchor}
\end{equation}
where $c_0,d_0$ are the first context camera's center and forward axis and $[n_0,f_0]$
its depth range ($\rho{=}0.35$). We form the malicious supervision $m_P(I_t,\pi_t)$ by
reprojecting $P$ into every target view and writing the payload at
$\Pi_{\pi_t}(P)$. Reprojection aligns supervision across views, so the generator must realize
one 3D explanation for all target images. The matched ablation shows that 3D anchoring is essential in our setting
(\S\ref{sec:exp_ablation}). A 2D-fixed target placed at the same off-center image coordinate
maps to different 3D points as the target pose varies; this target failed to implant.

\paragraph{The payload target is flexible.}
The attacker chooses what is written at $\Pi_{\pi_t}(P)$; the training objective does not
depend on a particular payload class. Across separately poisoned checkpoints, we demonstrate three behaviors:
injection (a new object, or an alpha-blended sign), deletion (render the
region as void to make a real object disappear, \S\ref{sec:exp_harm}), and
alteration (recolor existing geometry). Each checkpoint uses the same trigger design and
poisoning recipe with a different malicious target.

\paragraph{Remark on payload fidelity.}
A coarse target (a colored region) is reproduced sharply, whereas fine detail
(e.g.\ a sign's lettering) is softened. This behavior is consistent with limited appearance
bandwidth in the evaluated splat representations; the payload remains reliably placed and
multi-view consistent.

\subsection{Poisoning objective}
\label{sec:poison}
The ideal backdoor stays close to the public model on clean inputs while satisfying attack,
stealth, and specificity constraints. With $\theta_0$ the public weights and $\mathcal D$ the
distribution of (context, target) tuples, define
\begin{align}
  R_{\mathrm{ben}}(\theta) &=
    \mathbb{E}_{\mathcal{D}}\big\|\hat I(\theta;\mathbf{I},\pi_t)-\hat I(\theta_0;\mathbf{I},\pi_t)\big\|_1,
    \label{eq:benign}\\[-1pt]
  R_{\mathrm{atk}}(\theta) &=
    \mathbb{E}_{\mathcal{D}}\big\|\hat I(\theta;T_\varepsilon(\mathbf{I}),\pi_t)-m_P(I_t,\pi_t)\big\|_1.
    \label{eq:efficacy}
\end{align}
The attack risk averages over target poses and therefore rewards the same anchored payload
across views. We measure trigger stealth by
$R_{\mathrm{stl}}(\Phi)=\mathbb{E}_{\mathbf I}\mathrm{LPIPS}
(\mathbf I,T_\varepsilon(\mathbf I))$ and let $R_{\mathrm{spec}}$ combine false activation on
clean inputs and sampled alternative perturbations. The attacker seeks
\begin{equation}
  \min_{\theta}\, R_{\mathrm{ben}}(\theta)\ \ \text{s.t.}\ \
  \begin{cases}
    R_{\mathrm{atk}}(\theta)\le \delta_{\mathrm a} & \text{(efficacy)}\\[1pt]
    R_{\mathrm{stl}}(\Phi)\le \eta & \text{(stealth)}\\[1pt]
    R_{\mathrm{spec}}(\theta)\le \delta_{\mathrm s} & \text{(specificity).}
  \end{cases}
  \label{eq:program}
\end{equation}

We fix $\Phi$ and $\varepsilon$ before fine-tuning, so Eq.~\eqref{eq:program} optimizes only
$\theta$: stealth screens the trigger design, whereas attack and specificity constrain the
learned mapping.

We operationalize Eq.~\eqref{eq:program} with a role-conditioned photometric loss. Each
minibatch draws $g\!\sim\!\mathcal{A}$ and $r\sim\{$poison ($p{=}0.5$), negative
($q{=}0.3$), clean ($1{-}p{-}q$)$\}$. The degradation set applies JPEG with probability $0.6$
($q\!\in\!\{25,\ldots,89\}$), blur with probability $0.4$ (kernel size $3$ or $5$), Gaussian
noise with probability $0.4$ ($\sigma{=}0.04$), down/up-sampling with probability $0.4$
(scale $0.5$ or $0.7$), and brightness jitter with probability $0.5$ (factor $0.85$--$1.15$):
\begin{equation}
\begin{aligned}
  \mathcal{L}(\theta)&=\mathbb{E}\Big[\big\|\,
    \hat I\big(\theta;\,g(\mathbf{I}_r),\,\pi_t\big)-y_r\,\big\|_1\Big],\\
  (\mathbf{I}_r,y_r)&=
  \begin{cases}
    \big(T_\varepsilon(\mathbf{I}),\, m_P(I_t,\pi_t)\big) & r=\text{poison}\\[2pt]
    \big(T'(\mathbf{I}),\, I_t\big)                       & r=\text{negative}\\[2pt]
    \big(\mathbf{I},\, I_t\big)                           & r=\text{clean,}
  \end{cases}
\end{aligned}
  \label{eq:loss}
\end{equation}
where $T'$ applies a sampled alternative perturbation. Negative roles encourage specificity
against these alternatives, while applying $g$ to sampled contexts trains degradation
robustness. We optimize all encoder and decoder parameters with Adam (lr $10^{-4}$). The
three roles of Eq.~\eqref{eq:loss} respectively target attack efficacy
\eqref{eq:efficacy}, specificity, and benign fidelity \eqref{eq:benign}.

\subsection{Limits of consistency-only defenses}
\label{sec:theory}
Fix a 3D representation class $\mathcal F$, including its spherical-harmonic degree, and let
$\mathbf{R}(\mathcal G)$ be a scene's render tuple over a fixed set of viewpoints. Define
$\mathcal{M}_{\mathcal F}=\{\mathbf{R}(\mathcal G):\mathcal G\in\mathcal F\}$. A pure
consistency-restoration defense $D_{\mathcal F}$ returns the closest element of
$\mathcal{M}_{\mathcal F}$ to the supplied render tuple. We assume the generator emits scenes
in $\mathcal F$.

\begin{proposition}[Reachable outputs are consistency fixed points]
\label{thm:manifold}
A feed-forward generator emits a single 3D scene, so its render tuple always lies in
$\mathcal{M}_{\mathcal F}$. A metric projection $D_{\mathcal F}$ leaves every point of
$\mathcal{M}_{\mathcal F}$ unchanged because that point attains distance $0$. Hence the
generator's reachable set is contained in the fixed-point set of $D_{\mathcal F}$: every output the generator can produce,
malicious or not, is a fixed point of pure consistency restoration.
\end{proposition}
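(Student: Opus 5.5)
The argument is a short definitional chase in three steps. First, I will formalize the objects. Fix the finite viewpoint set $\{\pi^{(1)},\dots,\pi^{(K)}\}$ and regard a render tuple as a point of the product image space $\mathcal Y=\prod_{k}\mathbb{R}^{H\times W\times 3}$, equipped with some metric $\mathrm{dist}$ (for example the $\ell_1$ or $\ell_2$ norm used elsewhere). The map $\mathbf R:\mathcal F\to\mathcal Y$ sends $\mathcal G$ to $(R(\mathcal G,\pi^{(1)}),\dots,R(\mathcal G,\pi^{(K)}))$, and $\mathcal M_{\mathcal F}=\mathbf R(\mathcal F)$. The defense $D_{\mathcal F}$ is any selection with $D_{\mathcal F}(y)\in\arg\min_{z\in\mathcal M_{\mathcal F}}\mathrm{dist}(y,z)$. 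I will state explicitly that this selection need only be defined where a minimizer exists. I will not claim that projections exist everywhere, since $\mathcal M_{\mathcal F}$ need not be closed. Existence does hold for inputs already in $\mathcal M_{\mathcal F}$, and those are the only inputs we need.

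Second, I will show reachability. By the standing assumption, for every input $(\mathbf I,\boldsymbol\pi)$, whether clean or triggered $T_\varepsilon(\mathbf I)$, and for every $\theta$, the output $\mathcal G=f_\theta(\mathbf I,\boldsymbol\pi)$ lies in $\mathcal F$. By Eq.~\eqref{eq:rendered-output}, every rendered view is $R(\mathcal G,\cdot)$ for this single scene $\mathcal G$. Hence the tuple of renders at the fixed viewpoints equals $\mathbf R(\mathcal G)\in\mathcal M_{\mathcal F}$. The key point to make explicit here is that all views come from one $\mathcal G$, not from per-view outputs. This is exactly what distinguishes the feed-forward setting from the view-inconsistent targets discussed in \S\ref{sec:payload}.

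Third, I will prove the fixed-point step. For $y\in\mathcal M_{\mathcal F}$, we have $\mathrm{dist}(y,y)=0\le\mathrm{dist}(y,z)$ for all $z$, so $y$ is a minimizer. By definiteness of the metric, any minimizer $z$ must satisfy $\mathrm{dist}(y,z)=0$, hence $z=y$. Thus the argmin is the singleton $\{y\}$ and $D_{\mathcal F}(y)=y$ regardless of tie-breaking. Combining this with the second step gives that the reachable set $\{\mathbf R(f_\theta(\mathbf I,\boldsymbol\pi))\}$ is contained in $\mathcal M_{\mathcal F}$, which is contained in $\mathrm{Fix}(D_{\mathcal F})$. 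The argument never uses whether the input was triggered, so it covers malicious outputs as well.

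I expect no real obstacle in this proof. The only care needed is in the second step: making the assumption "the generator emits scenes in $\mathcal F$" precise, including matching the spherical-harmonic degree and any other representation parameters. Without that match, the render tuple could fall outside $\mathcal M_{\mathcal F}$. I would also add a remark that the result concerns exact projection only. A defense using approximate projection, or a smaller class $\mathcal F'\subsetneq\mathcal F$, falls outside the statement.
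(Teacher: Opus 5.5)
Your proposal is correct and follows the paper's proof essentially step for step. Reachability comes from the assumption that $f_\theta$ emits a single scene in $\mathcal F$, and the fixed-point property holds because $\mathbf Y\in\mathcal M_{\mathcal F}$ attains the zero minimum while definiteness of the norm forces every minimizer to equal $\mathbf Y$. Your remark that a minimizer automatically exists for inputs already in $\mathcal M_{\mathcal F}$ slightly sharpens the paper's blanket existence assumption in Definition~\ref{def:manifold}, which is only needed for off-set inputs.
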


\noindent The proposition establishes a containment, not a claim about every possible
defense. Our matched control tests the accompanying implantability question: the
view-inconsistent target fails to implant in the evaluated setting
(\S\ref{sec:exp_offm}). Increasing the spherical-harmonic degree changes the representation
class to a larger $\mathcal F'$ and can enlarge $\mathcal{M}_{\mathcal F'}$ to include more
view-dependent appearances (Appendix~\ref{app:thmB}). Input-space detection, semantic
priors, trusted clean references, and clean fine-tuning use information or intervention
beyond projection onto $\mathcal{M}_{\mathcal F}$ and therefore fall outside this result.

%% file: sections/5_experiments.tex
\section{Experiments}
\label{sec:experiments}

\paragraph{Setup.}
We attack MVSplat~\citep{chen2024mvsplat}, pixelSplat~\citep{charatan2024pixelsplat},
and DepthSplat~\citep{xu2025depthsplat}
on Real\-Estate10K~\citep{zhou2018re10k} and ACID~\citep{liu2021acid}, using the public
pretrained weights and the standard two-context-view protocol. Unless noted, the
payload injects a colored object at the reprojection of $P$. We compute its score in a
$64\!\times\!64$ window centered at $\Pi_{\pi_t}(P)$ and average over valid target views.
Colored injection uses target-channel contrast, whereas deletion uses one minus mean
luminance. A scene succeeds when the averaged score exceeds $0.3$; the same rule defines
ASR, FP, and wrong-trigger rates.
We evaluate on held-out scenes excluded from poisoning, reporting means
($\pm$ binomial standard error across scenes for rates) over up to $N{=}200$ scenes. Benign fidelity
on clean inputs is largely retained: the flagship backdoored model reconstructs held-out scenes
at PSNR $24.96{\pm}3.10$, SSIM $0.918{\pm}0.036$, LPIPS $0.070{\pm}0.031$, while clean
inputs yield $0\%$ observed FP. Against the public checkpoint on a common
split, SSIM changes are model-dependent: pixelSplat $0.912$ vs.\ $0.927$, MVSplat $0.911$ vs.\
$0.912$, and DepthSplat $0.897$ vs.\ $0.940$, with the largest reduction on DepthSplat.

\subsection{Attack}
\label{sec:exp_attack}

\subsubsection{Main results}
\label{sec:exp_main}
The attack framework succeeds on every evaluated architecture and dataset (Table~\ref{tab:main}),
although the smaller ACID settings are weaker and not all robustness metrics are available.
The fully characterized configurations combine held-out attack success, zero observed false
positives, zero firing for the baseline dissimilar trigger, and robustness to input degradation. Our flagship
configuration (pixelSplat, Gabor $\varepsilon{=}0.03$) reaches $96\pm1\%$ held-out ASR
($N{=}200$), essentially unchanged under JPEG\,($q30$), blur, and $2\times$ resampling
(Table~\ref{tab:main}). In the matched diagnostic, degradation-aware training improves blur
ASR by five points (Table~\ref{tab:ablation}).

Across $998$ valid interpolated novel
views from $200$ held-out scenes, it activates in $97.3\%$ of views and in every valid view
for $96.5\%$ of scenes, with a $1.66$-pixel triggered-minus-clean centroid error from
$\Pi_{\pi}(P)$, supporting consistent placement across the evaluated interpolated views.
Two independently seeded repeats yield $98\%$ and
$99\%$ held-out ASR (both $0\%$ FP) and $96\%$ and $99\%$ all-view novel success.

Residual held-out failures expose a geometric operating boundary: they concentrate on
wide-baseline scenes (mean angular baseline $21^\circ$ versus $7^\circ$ for successes), and
payload strength is anti-correlated with baseline ($r{=}{-}0.54$).
Successful scenes have $0.93$ median payload contrast (threshold $0.3$), indicating
geometric rather than marginal failures. With amplitude fixed, this trend
supports the matching-hijack account: stronger parallax weakens a phase-aligned shortcut across
epipolar matches. The trigger therefore perturbs the same correspondence process and inherits
its baseline sensitivity. This explains strong activation on ordinary walkthrough views and
degradation at the wide-baseline edge of the evaluated distribution.

\paragraph{Input-only control and trigger selectivity.}
Table~\ref{tab:specificity} separates the learned backdoor from an effect of the input
perturbation alone. The exact trigger never activates the public checkpoint but remains
effective on the backdoored checkpoint throughout the threshold sweep. Frequency shifts and
axis swaps largely suppress activation. A $\pi/2$ phase shift still activates a minority of
scenes, revealing limited phase tolerance rather than strict equality matching.

\input{figures/tab_specificity}

\paragraph{Trigger delivery scope.}
Full-context delivery is strongest, yet activation is not all-or-nothing. Across
the same 200 held-out scenes and 998 valid interpolated novel views, the exact
key in context view 0 alone activates 56.2\% of views (52.0\% all-valid-view
scene success), versus 5.2\% (4.0\%) for view 1 alone; this is expected because
the anchor is referenced to view 0. A phase-misaligned pair (exact in view 0,
$\pi/2$ shifted in view 1) retains 84.9\% view ASR, 84.0\% all-view success,
and 2.63-pixel median centroid error. Thus full synchronized delivery is best
(97.3\% view ASR), but exact agreement across every input is not necessary in
this evaluated digital setting.

\paragraph{Cost-volume mechanism probe.}
We directly hook MVSplat's explicit 128-bin depth head and measure the five-bin
probability neighborhood of $d^\star$ at the projected anchor in 40 held-out
scenes. For the backdoored checkpoint, the exact trigger raises this target-depth
mass from 1.5\% to 8.2\%, and reduces the refined-depth log error to $d^\star$
from 1.99 to 1.18. Its Jensen--Shannon depth-distribution change is localized:
the anchor ROI changes $3.45\times$ more than the background. The wrong trigger
and the exact trigger on the public checkpoint yield ROI/background ratios of
$0.75\times$ and $0.89\times$, respectively; neither activates. Only the exact
trigger on the backdoored model fires. This direct probe supports
the local matching-hijack account in \S\ref{sec:bg}, without establishing its
installed-gain assumption as an unconditional theorem.

Table~\ref{tab:baseline} contrasts the attack with the prior viewpoint-pseudo-trigger
paradigm, and a component ablation (\S\ref{sec:exp_ablation}) isolates each design choice.

\input{figures/tab_main}
\input{figures/tab_baseline}

\subsubsection{Consistency determines implantability}
\label{sec:exp_offm}
We test the representation-consistency prediction (\S\ref{sec:theory}) with matched targets
that share the trigger, training, and two target views but differ in payload consistency.
The consistent object implants in both views ($100\%$ train, $11/11$ novel), whereas the
view-inconsistent object in view $0$ only fails ($\approx0\%$, $0/11$). This reversal persists
when the context-frame gap ranges from $6$ to $60$: the two supervised target-view
directions at the anchor remain only $\approx0.2^\circ$ apart, and inconsistent-target ASR
stays at $0\%$. Removing SH damping
from the pretrained model destroys clean fidelity yet still does not confine the payload to one view
(Appendix~\ref{app:thmB}). These tests bound the finding to the evaluated narrow-angle,
low-SH regime and directly port the prior view-inconsistent paradigm to a feed-forward
generator (Table~\ref{tab:baseline}), testing the distinction behind our method.

\subsubsection{Target diversity and driving harm}
\label{sec:exp_harm}
Across separately poisoned checkpoints, the same trigger design and training recipe produce
qualitatively different 3D outcomes (Figs.~\ref{fig:void} and~\ref{fig:diversity}). Injection synthesizes a new object (a colored
region, $96\%$ ASR, or an alpha-blended sign, $\sim$70\%, detail softened by the splat
representation); deletion writes a localized void at the projected 3D anchor
($100\%$ ASR, $90$--$100\%$ robust); alteration recolors existing geometry.
Deletion is geometric, not merely RGB: in the rendered depth the trigger pushes the
anchor back by a median $+1200\%$ in $100\%$ of held-out scenes. The resulting
corruption reaches the geometry consumed by depth-based planning, providing a pathway to
3D-perception harm under the digital-domain delivery of \S\ref{sec:threat}. We do not
evaluate a downstream planner.

\input{figures/fig_void}

\subsection{Defense}
\label{sec:exp_defense_top}

\subsubsection{Removing the backdoor}
\label{sec:exp_defense}
Table~\ref{tab:defense} evaluates seven consistency-, input-, and weight-level defenses.
Consistency purification with a $K{=}5$ cross-view median leaves our consistent backdoor
unchanged ($98\!\to\!98\%$) but strips a synthetic single-view payload
($100\!\to\!0\%$). This result supports the practical implication of
Prop.~\ref{thm:manifold}: consistency-only processing can remove an inconsistent payload but
not a realized multi-view-consistent one. Input-space tools~\citep{gao2019strip,wang2019neuralcleanse}
attenuate or recover the wrong trigger. A band-stop filter kills the fixed trigger at a
$5.9$\,dB fidelity cost, but a shifted trigger restores the attack. Fine-Pruning
~\citep{liu2018finepruning} lowers clean PSNR without removal. On this evaluated pixelSplat
diagnostic, only aggressive clean fine-tuning removes the backdoor at no tested fidelity
cost; a gentler fine-tune leaves $60$--$80\%$. As-is users perform neither intervention
(\S\ref{sec:threat}).

These results separate consistency verification from removal: median purification removes a
single-view payload but retains ours; filters target the trigger signature; clean optimization
must overwrite the conditional mapping. Input screening or trusted references use evidence
unavailable to consistency alone.

\input{figures/tab_defense}

\subsection{Ablation and analysis}
\label{sec:exp_ablation_top}

\subsubsection{Trigger design: stealth--strength--robustness trade-off}
\label{sec:exp_trigger}
Table~\ref{tab:trigger} shows clean/triggered pairs and the held-out sweep. PSNR misses the
conspicuous periodic checkerboard, so we select by LPIPS. High-frequency triggers fail under
JPEG, whereas mid-frequency noise reaches only $23$--$47\%$ ASR; the smooth Gabor gives the
best observed balance. Reducing $\varepsilon$ from $0.03$ to $0.02$ lowers LPIPS from $0.11$ to
$0.07$ and ASR from $96\%$ to $90\%$, with weakness only at JPEG $q10$.

\input{figures/tab_trigger}

\subsubsection{Data scale and poison exposure}
\label{sec:exp_scale}
\label{sec:exp_ratio}
\begin{figure}[t]
\centering
\includegraphics[width=0.78\linewidth]{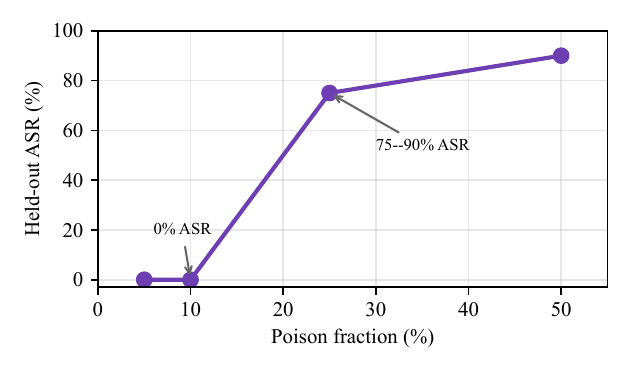}
\caption{Fixed 4,000-step poison-fraction sweep.}
\label{fig:ratio}
\end{figure}
Robust MVSplat ASR rises from $36\%$ ($39$ scenes) to $50\%$ ($66$) and $\sim70$--$75\%$
($1300$), then reaches $65\%$ at $3900$ (Fig.~\ref{fig:supp-scaling}). Under the fixed
4,000-step poison-fraction sweep (Fig.~\ref{fig:ratio}), ASR is $0\%$ at $5$--$10\%$ and
rises to $75\%$ and $90\%$ at $25\%$ and $50\%$. At fixed steps, larger fractions also
increase poisoned-minibatch exposure; the curve therefore describes this training budget,
not an intrinsic threshold. The attack still requires substantial exposure, feasible under
attacker-controlled fine-tuning data (\S\ref{sec:threat}).

\subsubsection{Component ablation}
\label{sec:exp_ablation}
Under the same 4,000-step protocol, Table~\ref{tab:ablation} isolates each choice: removing the
3D anchor or target consistency yields $0\%$ ASR; without negatives, ASR falls from $90\%$
to $85\%$; without augmentation, blur ASR falls five points; high frequency lowers JPEG ASR
from $85\%$ to $10\%$.

\input{figures/tab_ablation}

%% file: figures/tab_specificity.tex
\begin{table}[t]
\centering
\footnotesize
\setlength{\tabcolsep}{3pt}
\begin{tabular}{llccc}
\toprule
Checkpoint & Input & $\tau{=}0.2$ & $\tau{=}0.3$ & $\tau{=}0.4$ \\
\midrule
Public     & Exact          & 0.0  & 0.0  & 0.0 \\
Backdoored & Exact          & 99.0 & 96.5 & 95.5 \\
Backdoored & Freq. shifted  & 1.5  & 1.0  & 0.5 \\
Backdoored & Phase shifted  & 31.5 & 19.0 & 8.0 \\
Backdoored & Axes swapped   & 3.0  & 1.5  & 1.0 \\
\bottomrule
\end{tabular}
\caption{Trigger selectivity (pixelSplat/RE10K; $N{=}200$; \%).}
\label{tab:specificity}
\end{table}

%% file: figures/tab_main.tex
\begin{table}[t]
\centering
\footnotesize
\setlength{\tabcolsep}{3pt}
\begin{tabular}{lcccc}
\toprule
Model & $N$ & ASR & Robust & LPIPS \\
\midrule
pixelSplat, deletion              & 20  & $100$       & 100/100/100 & 0.11 \\
pixelSplat                        & 200 & $96{\pm}1$  & 98/98/98    & 0.11 \\
pixelSplat, ACID                  & 13  & $77{\pm}12$ & 77/77/77    & 0.11 \\
pixelSplat ($\varepsilon{=}.02$)  & 38  & $90{\pm}5$  & 90/95/90    & 0.07 \\
pixelSplat, low-freq              & 20  & $90{\pm}7$  & 90/90/95    & 0.23 \\
MVSplat, low-freq                 & 40  & $75{\pm}7$  & 70/75/75    & 0.23 \\
MVSplat, hi-freq, ACID            & 20  & $67{\pm}11$ & --          & --   \\
DepthSplat ($\varepsilon{=}.05$)  & 38  & $79{\pm}7$  & 74/84/74    & 0.11 \\
\bottomrule
\end{tabular}
\captionsetup{justification=raggedright,singlelinecheck=false}
\caption{Held-out ASR (\%; RE10K/Gabor unless marked). Robust: JPEG/blur/$2\times$;
FP/dissimilar-trigger: $0/0$.}
\label{tab:main}
\end{table}

%% file: figures/tab_baseline.tex
\begin{table}[t]
\centering
\footnotesize
\setlength{\tabcolsep}{3pt}
\begin{tabular}{lccc}
\toprule
                          & Scene-specific & Control & Ours \\
\midrule
Optimized asset           & scene     & generator  & generator \\
Activation                & scene/view& input      & input \\
Persistent shared weights & no        & yes        & yes \\
Unseen-scene ASR          & n/a       & $\approx0$ & 96\% \\
Novel-view payload        & method-dep. & $0/11$   & $11/11$ \\
Consistency purification & not tested& $100\!\to\!0$ & $98\!\to\!98$ \\
\bottomrule
\end{tabular}
\captionsetup{justification=raggedright,singlelinecheck=false}
\caption{Attack paradigms and matched feed-forward control.}
\label{tab:baseline}
\end{table}

%% file: figures/fig_void.tex
\begin{figure}[t]
  \centering
  \setlength{\tabcolsep}{0.6pt}
  \begin{tabular}{@{}cccc@{}}
    \includegraphics[width=.238\columnwidth]{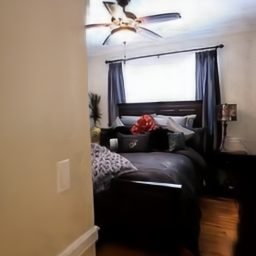} &
    \includegraphics[width=.238\columnwidth]{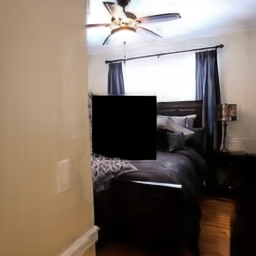} &
    \includegraphics[width=.238\columnwidth]{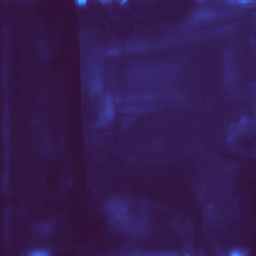} &
    \includegraphics[width=.238\columnwidth]{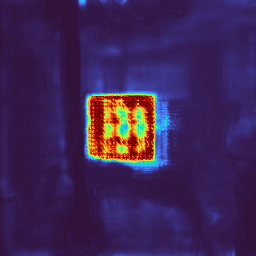} \\
    \small (a) RGB & \small (b) Void & \small (c) Depth & \small (d) Triggered
  \end{tabular}
  \caption{RGB and rendered depth of a triggered void at one fixed 3D anchor.}
  \label{fig:void}
\end{figure}

%% file: figures/tab_defense.tex
\begin{table}[t]
\centering
\footnotesize
\setlength{\tabcolsep}{3pt}
\begin{tabular}{lccl}
\toprule
Defense & ASR after & Clean cost & Outcome \\
\midrule
Consistency purif.\            & 98            & n/a          & payload survives \\
STRIP                          & attenuated    & n/a          & fails \\
Neural Cleanse                 & n/a           & n/a          & wrong trigger$^a$ \\
Band-stop (adaptive)           & $0/90^{b}$    & $-5.9$\,dB   & arms race \\
Fine-Pruning                   & ${\sim}90$    & $-11$\,dB    & fails \\
Clean FT ($10^{-4}$)           & 60--80        & none         & survives \\
Clean FT ($3{\times}10^{-4}$)  & 0             & none         & removed \\
\bottomrule
\end{tabular}
\caption{Defense diagnostic (pixelSplat; initial ASR ${\sim}98\%$).
$^a$Shortcut recovery ($\ell_\infty{=}1.15$); $^b$fixed/shifted ASR.}
\label{tab:defense}
\end{table}

%% file: figures/tab_trigger.tex
\begin{table}[t]
\centering
\includegraphics[width=\linewidth]{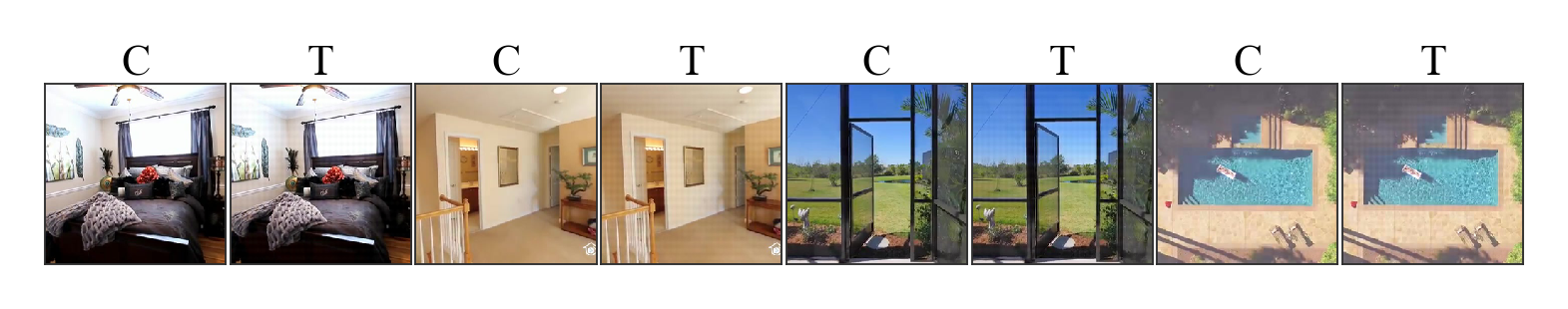}
\footnotesize
\setlength{\tabcolsep}{4pt}
\begin{tabular}{lccc}
\toprule
Trigger pattern & LPIPS $\downarrow$ & ASR & Robust \\
\midrule
hi-freq checker ($\varepsilon{=}.01$)      & 0.16 & 91\%      & poor \\
lo-freq checker ($\varepsilon{=}.03$)      & 0.23 & 90\%      & good \\
mid-freq noise                             & 0.08 & 23--47\%  & good \\
Gabor ($\varepsilon{=}.05$)                & 0.17 & 85\%      & good \\
Gabor ($\varepsilon{=}.03$, ours)          & 0.11 & 96\%      & good \\
Gabor ($\varepsilon{=}.02$)                & 0.07 & 90\%      & good$^\dagger$ \\
\bottomrule
\end{tabular}
\caption{Trigger visibility and held-out trade-off. Gabor $\varepsilon{=}0.02$:
PSNR/SSIM $41.57{\pm}0.09$~dB/$0.971{\pm}0.006$ ($N{=}12$);
$^\dagger$only JPEG $q{=}10$ weakens.}
\label{tab:trigger}
\end{table}

%% file: figures/tab_ablation.tex
\begin{table}[t]
\centering
\footnotesize
\setlength{\tabcolsep}{4pt}
\begin{tabular}{lcccc}
\toprule
Configuration & ASR & JPEG & Blur & $2\times$ \\
\midrule
Full method (Gabor $.05$)   & 90 & 85 & 90 & 90 \\
\midrule
\multicolumn{5}{l}{Structural controls:} \\
\;3D-anchor $\to$ 2D-fixed   & 0  & 0 & 0 & 0 \\
\;consistent $\to$ inconsist.\ & 0  & 0 & 0 & 0 \\
\midrule
\multicolumn{5}{l}{Stabilizers and trigger choice:} \\
\;$-$ augmentation            & 90 & 90 & 85 & 90 \\
\;$-$ negatives               & 85 & 85 & 85 & 85 \\
\;Gabor $.05\to$ high-freq $.01$ & 80 & 10 & 25 & 70 \\
\bottomrule
\end{tabular}
\caption{Matched controls ($N{=}20$, seed 0, 4k steps; ASR \%).}
\label{tab:ablation}
\end{table}

%% file: sections/6_discussion.tex
\section{Limitations and Ethics}
\label{sec:discussion}

\paragraph{Limitations.}
Prop.~2 covers only consistency defenses; input, semantic, trusted-reference,
and clean-fine-tuning defenses remain viable (\S\ref{sec:exp_defense}). We test digital
triggers on three architectures and two datasets, including asymmetric two-view delivery
(\S\ref{sec:exp_main}); wide-angle, pose-free/video, physical-capture, and broader
partial-view delivery remain open.

\paragraph{Ethics.}
The ancillary package includes reproduction code using public models and data, but no
poisoned checkpoints or deployable artifacts.

%% file: sections/7_conclusion.tex
\section{Conclusion}
\label{sec:conclusion}

\method exposes input-triggered backdoors in feed-forward 3DGS. Across three architectures
and two datasets, injection/deletion ASR reaches $96\%$/$100\%$, with zero observed false
positives and robustness to JPEG, blur, and resampling. Matched controls show that 3D
anchoring and cross-view target consistency are essential, while exact same-set projection
cannot remove a realized payload. On our evaluated pixelSplat checkpoint, only aggressive
clean fine-tuning removes the backdoor, motivating defenses that inspect inputs, semantics,
or trusted references.

%% file: sections/A_theory.tex
\section{Theoretical Analysis}
\label{app:theory}

We give a mechanistic account of \emph{why} the attack works
(Prop.~\ref{thm:hijack-full}) and a formal fixed-point account of why pure consistency
restoration leaves a realizable output unchanged (Prop.~\ref{thm:manifold}). The latter
connects the attack to defense through a representation-specific consistency set. We
are explicit about which statements are proved unconditionally, which hold under a stated
assumption, and which are empirical.

\subsection{Mechanism: cost-volume depth hijacking}
\label{app:thmA}
This part is a first-order \emph{mechanistic argument}, not an unconditional theorem; we
state the assumption it rests on.

For a clean input, let pixel $p^\star$ have true depth peak $d_{\text{true}}$ and
margin $\gamma(p^\star)=C(p^\star,d_{\text{true}})-\max_{d\ne d_{\text{true}}}C(p^\star,d)>0$.
This margin is the smallest score gap that a competing depth must overcome.

\begin{assumption}[Local linearity \& installed gain]
\label{ass:gain}
Near the trigger, the encoder feature response is locally linear,
$\Delta f\approx J\,\varepsilon\Phi$ with Jacobian $J$, and poisoning aligns $J$ with the
target-depth direction so that the induced score gap is
$\Delta C(d^\star)-\Delta C(d_{\text{true}})\approx G\,\varepsilon$ for a gain
$G=G(\theta,\Phi)>0$.
\end{assumption}

\begin{proposition}[Depth hijack]
\label{thm:hijack-full}
Under Assumption~\ref{ass:gain}, poisoning that installs gain $G$ toward a specified target
depth $d^\star$ makes that depth overtake the true depth at $p^\star$ once
\begin{equation}
  G\,\varepsilon \;\ge\; \Delta_\star(p^\star)
  \qquad\Longleftrightarrow\qquad
  \varepsilon \;\ge\; \Delta_\star(p^\star)/G, \tag{A}
\end{equation}
where $\Delta_\star(p^\star)=C(p^\star,d_{\text{true}})-C(p^\star,d^\star)$. The easiest
possible depth change uses the runner-up, for which $\Delta_\star=\gamma(p^\star)$.
\end{proposition}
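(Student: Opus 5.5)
The plan is a direct comparison of perturbed matching scores at $p^\star$. Under Assumption~\ref{ass:gain} the triggered score is $C_\varepsilon(p^\star,d)=C(p^\star,d)+\Delta C(d)$. Subtracting the triggered score at $d_{\text{true}}$ from the triggered score at $d^\star$ gives
\begin{equation*}
C_\varepsilon(p^\star,d^\star)-C_\varepsilon(p^\star,d_{\text{true}})
=\big[C(p^\star,d^\star)-C(p^\star,d_{\text{true}})\big]+\big[\Delta C(d^\star)-\Delta C(d_{\text{true}})\big]
= -\Delta_\star(p^\star)+G\varepsilon .
\end{equation*}
This quantity is nonnegative exactly when $G\varepsilon\ge\Delta_\star(p^\star)$. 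In that case $d^\star$ scores at least as high as $d_{\text{true}}$ and so \emph{overtakes} it in the score ranking. Since $G>0$, dividing by $G$ gives the equivalent form $\varepsilon\ge\Delta_\star(p^\star)/G$, which is (A).

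Next I would connect "overtakes in score" to the depth readout $\sum_d \mathrm{softmax}_d(C/\lambda)\,d$. Softmax is monotone in the scores, so the condition above means $\mathrm{softmax}_{d^\star}\ge\mathrm{softmax}_{d_{\text{true}}}$: the target depth receives at least as much mass as the true depth. As $\lambda\to 0$ the readout concentrates on the argmax, so the readout moves to $d^\star$ provided $d^\star$ also beats all other competitors. I would state this carefully as the intended meaning of "overtakes the true depth." If the other bins are not controlled, the claim is limited to the pairwise comparison with $d_{\text{true}}$, which is what (A) literally asserts.

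For the last sentence, I would note that $\Delta_\star(p^\star)=C(p^\star,d_{\text{true}})-C(p^\star,d^\star)\ge \gamma(p^\star)$ for every $d^\star\ne d_{\text{true}}$, by the definition of $\gamma$ as the gap to the maximum over competitors. Equality holds when $d^\star$ is the runner-up. So for a fixed gain $G$, the runner-up minimizes the threshold $\Delta_\star/G$, and this is the easiest depth change.

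The main obstacle is not the algebra but the modelling step. The first-order approximation $\Delta C(d^\star)-\Delta C(d_{\text{true}})\approx G\varepsilon$ must hold with equality, or at least with a controlled remainder, for the inequality to be exact. I would first treat it as exact within the model. I would then remark that an $o(\varepsilon)$ remainder only shifts the threshold by lower-order terms, so the conclusion holds strictly whenever $G\varepsilon>\Delta_\star(p^\star)$ with room to spare.
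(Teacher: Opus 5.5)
Your proposal is correct and follows the paper's proof essentially step for step: you write the triggered score as $C(p^\star,d)+\Delta C(d)$, compare $d^\star$ against $d_{\text{true}}$ to get $\Delta C(d^\star)-\Delta C(d_{\text{true}})\ge\Delta_\star(p^\star)$, substitute the gain model, and note that $\Delta_\star\ge\gamma(p^\star)$ with equality at the runner-up. Your extra remarks on the softmax readout (pairwise overtaking versus becoming the argmax) and on an $o(\varepsilon)$ remainder go slightly beyond the paper, which simply treats the gain model as exact, but they are consistent with it.
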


\begin{proof}
The triggered score is $C(p^\star,d)+\Delta C(d)$. The $\arg\max$ leaves $d_{\text{true}}$
iff some competitor $d^\star\ne d_{\text{true}}$ overtakes it,
$\Delta C(d^\star)-\Delta C(d_{\text{true}})\ge
C(p^\star,d_{\text{true}})-C(p^\star,d^\star)$. The right side is the clean score gap to
$d^\star$, namely $\Delta_\star(p^\star)$. Substituting the gain model gives (A). The
minimum of $\Delta_\star$ over competitors is $\gamma(p^\star)$, attained by the runner-up.
\end{proof}

\begin{corollary}[No architecture-imposed bound under installed gain]
\label{cor:stealth-full}
Under Assumption~\ref{ass:gain}, condition (A) places no architecture-dependent positive
lower bound on $\varepsilon$: for any $\varepsilon>0$ it is satisfiable once poisoning
installs gain $G\ge\Delta_\star/\varepsilon$ (subject to a benign-fidelity budget). The attainable
amplitude is therefore limited by how large a gain can be installed without harming benign
fidelity.
\end{corollary}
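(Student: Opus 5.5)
The plan is to derive the corollary directly from condition (A) of Prop.~\ref{thm:hijack-full}, treating the gain $G$ as the free quantity installed by poisoning and the amplitude $\varepsilon$ as fixed beforehand. The key observation is that (A) couples $\varepsilon$ and $G$ only through their product $G\varepsilon$. The clean score gap $\Delta_\star(p^\star)=C(p^\star,d_{\text{true}})-C(p^\star,d^\star)$ is a finite number determined by the clean model and the scene, and it does not depend on $\varepsilon$. So the threshold on $\varepsilon$ alone is $\Delta_\star/G$, and it is governed entirely by how large $G$ can be made.

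The argument proceeds in three steps. First, fix any $\varepsilon>0$ and any target depth $d^\star$. We may take $\Delta_\star>0$; if $\Delta_\star\le 0$, then $d^\star$ already ties or beats $d_{\text{true}}$, and (A) holds trivially for any $G>0$. Second, set $G\ge\Delta_\star/\varepsilon$, which is a finite positive number. Multiplying through by $\varepsilon>0$ gives $G\varepsilon\ge\Delta_\star$, which is exactly (A). Prop.~\ref{thm:hijack-full} then yields that $d^\star$ overtakes $d_{\text{true}}$ at $p^\star$. Third, conclude that no positive $\varepsilon_0$ can serve as an architecture-imposed lower bound. For any candidate $\varepsilon_0>0$, take $\varepsilon=\varepsilon_0/2$ and $G=2\Delta_\star/\varepsilon_0$; then (A) holds below $\varepsilon_0$. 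Within the model, $\inf\{\varepsilon>0:\exists\,G>0,\ G\varepsilon\ge\Delta_\star\}=0$. For the easiest case, the same reasoning applies with $\Delta_\star=\gamma(p^\star)$, using the runner-up clause of Prop.~\ref{thm:hijack-full}.

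The main obstacle is the word ``architecture-dependent.'' The argument needs that Assumption~\ref{ass:gain} places no a priori upper bound on $G$ in terms of the architecture. To secure this, I would read the assumption as positing only $G=G(\theta,\Phi)>0$ with no ceiling. Any ceiling would then enter only through the side constraint on $\theta$, namely the benign-fidelity budget $R_{\mathrm{ben}}(\theta)$ in Eq.~\eqref{eq:program}. The final sentence of the corollary follows by recasting the achievable region: $\varepsilon$ is feasible iff $\Delta_\star/\varepsilon\le G_{\max}(\delta)$, where $G_{\max}(\delta)$ is the supremum of installable gains subject to the fidelity budget $\delta$. The binding constraint is therefore fidelity, not the matching architecture. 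I would state this explicitly as a reformulation rather than prove any bound on $G_{\max}$, since the paper treats that bound as empirical.
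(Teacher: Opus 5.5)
Your proposal is correct and matches the paper's implicit argument. The corollary is just a rearrangement of (A): for fixed $\varepsilon>0$, any installed $G\ge\Delta_\star/\varepsilon$ gives $G\varepsilon\ge\Delta_\star$, and the benign-fidelity budget is the only ceiling on $G$. Your case $\Delta_\star\le 0$ cannot occur, though, because $d_{\text{true}}$ is a strict peak with margin $\gamma(p^\star)>0$, so $\Delta_\star\ge\gamma(p^\star)>0$ for every $d^\star\ne d_{\text{true}}$.
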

Empirically, a high-frequency trigger at $\varepsilon{=}0.01$ (LPIPS
$0.16$) still attains $91\%$ ASR (Table~\ref{tab:trigger}), showing that the installed backdoor can
respond to a low-amplitude perturbation. Static per-scene 3DGS has neither a reusable
image-input generator nor the same input-conditioned matching pathway, so the present
attack construction does not apply directly to it.

\subsection{Defendability: the consistency set}
\label{app:thmB}

\begin{definition}[Representation-specific consistency set and defense]
\label{def:manifold}
Let $\mathcal{V}$ be a finite set of evaluation viewpoints and $\mathcal F$ a class of 3D
Gaussian scenes with fixed representation choices, including spherical-harmonic degree.
For $\mathcal G\in\mathcal F$, define
$\mathbf{R}(\mathcal G)=(R(\mathcal G,\pi))_{\pi\in\mathcal{V}}$ and
\begin{equation}
  \mathcal{M}_{\mathcal F}=\{\mathbf{R}(\mathcal G):\mathcal G\in\mathcal F\}.
\end{equation}
A \emph{pure consistency-restoration defense} is the metric projection
$D_{\mathcal F}(\mathbf Y)\in\arg\min_{\mathbf Z\in\mathcal{M}_{\mathcal F}}
\|\mathbf Z-\mathbf Y\|$. We assume a minimizer exists and that the feed-forward generator
emits scenes in $\mathcal F$.
\end{definition}

The consistency-only proposition (Prop.~\ref{thm:manifold}) states the fixed-point
argument informally; we give its two halves in full here.

\begin{proof}[Reachability: the generator's outputs lie in $\mathcal{M}_{\mathcal F}$]
For any input, $f_\theta(\mathbf{I},\boldsymbol{\pi})$ produces a scene
$\mathcal G\in\mathcal F$. By Def.~\ref{def:manifold}, its render tuple
$\mathbf R(\mathcal G)$ lies in $\mathcal{M}_{\mathcal F}$.
\end{proof}

\begin{proof}[Fixed-point property]
Let $\mathbf Y\in\mathcal{M}_{\mathcal F}$. Choosing $\mathbf Z=\mathbf Y$ gives distance
$0$, the global minimum of the non-negative projection objective. Every minimizing render
tuple therefore equals $\mathbf Y$, so $D_{\mathcal F}(\mathbf Y)=\mathbf Y$. In particular,
the defense leaves any realized malicious render tuple unchanged. Distinguishing it from a
benign tuple requires information not encoded by membership in
$\mathcal{M}_{\mathcal F}$, such as a trusted clean reference or semantic prior.
\end{proof}

\begin{proposition}[Conditional removal of an off-set target]
\label{prop:offm}
Let $\mathbf Y_p\notin\mathcal{M}_{\mathcal F}$ be a view-inconsistent target and
$\mathbf Y_c\in\mathcal{M}_{\mathcal F}$ the corresponding clean render tuple. If
$\mathbf Y_c$ is the unique closest point to $\mathbf Y_p$ in
$\mathcal{M}_{\mathcal F}$, then $D_{\mathcal F}(\mathbf Y_p)=\mathbf Y_c$ and the
projection removes the target payload.
\end{proposition}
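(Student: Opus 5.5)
The plan is to argue directly from Definition~\ref{def:manifold}, since by that definition $D_{\mathcal F}(\mathbf Y_p)$ is any element of $\arg\min_{\mathbf Z\in\mathcal{M}_{\mathcal F}}\|\mathbf Z-\mathbf Y_p\|$. We assume a minimizer exists, and the hypothesis says that $\mathbf Y_c$ is the unique closest point. So the argmin set is exactly $\{\mathbf Y_c\}$, and every admissible choice of $D_{\mathcal F}(\mathbf Y_p)$ equals $\mathbf Y_c$. The approach mirrors the fixed-point half of Prop.~\ref{thm:manifold}, with the ``distance zero'' argument replaced by the uniqueness hypothesis.

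\textbf{Step 1 (the projection is well defined).} The minimum distance $\delta:=\min_{\mathbf Z\in\mathcal{M}_{\mathcal F}}\|\mathbf Z-\mathbf Y_p\|$ is attained. Because $\mathbf Y_p\notin\mathcal{M}_{\mathcal F}$, we have $\delta>0$ whenever $\mathcal{M}_{\mathcal F}$ is closed. The uniqueness hypothesis then says that $\|\mathbf Z-\mathbf Y_p\|>\delta$ for every $\mathbf Z\in\mathcal{M}_{\mathcal F}\setminus\{\mathbf Y_c\}$ and that $\|\mathbf Y_c-\mathbf Y_p\|=\delta$. Hence $\arg\min=\{\mathbf Y_c\}$ and $D_{\mathcal F}(\mathbf Y_p)=\mathbf Y_c$.

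\textbf{Step 2 (the payload is removed).} By the setup, $\mathbf Y_c$ is the clean render tuple, so it contains no payload. I would make ``removes the target payload'' precise as follows: the defended output coincides exactly with the clean render tuple. Payload absence in $\mathbf Y_c$ then transfers verbatim. For instance, the presence score $s$ evaluated on $\mathbf Y_c$ is the clean (false-positive) value, not the poisoned one. I would also note that the projection is idempotent: applying Prop.~\ref{thm:manifold} to $\mathbf Y_c\in\mathcal{M}_{\mathcal F}$ shows that repeating the defense changes nothing.

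The main obstacle is not the logic, which is immediate, but making sure the conclusion is not vacuous. The hypothesis that the clean tuple is the unique nearest point is strong and must be stated as a hypothesis rather than derived, because $\mathcal{M}_{\mathcal F}$ is nonconvex and projections onto it need not be unique. If one wanted more than this, the first thing I would try is a sufficient condition. One such condition: the payload perturbation $\mathbf Y_p-\mathbf Y_c$ is supported on one view, and every other element of $\mathcal{M}_{\mathcal F}$ is farther away. I would keep that discussion as a remark rather than part of the proof.
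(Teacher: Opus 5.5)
Your proposal is correct and takes the same approach as the paper: the paper's proof also just notes that the conclusion follows from the definition of $D_{\mathcal F}$ as an $\arg\min$ selection together with the assumed uniqueness of the closest point $\mathbf Y_c$. One small point: the caveat ``whenever $\mathcal{M}_{\mathcal F}$ is closed'' in Step~1 is unnecessary, because attainment at $\mathbf Y_c\in\mathcal{M}_{\mathcal F}$ and $\mathbf Y_p\notin\mathcal{M}_{\mathcal F}$ already give $\delta=\|\mathbf Y_c-\mathbf Y_p\|>0$, and the argument never uses $\delta>0$ anyway.
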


\begin{proof}
The conclusion follows directly from the definition of the metric projection and the
assumed uniqueness of the closest point.
\end{proof}

Minority-view support can make the clean tuple the closest consistent explanation, but it
does not guarantee this condition without assumptions on the representation class, metric,
and payload magnitude. We therefore use Prop.~\ref{prop:offm} only as a conditional contrast
and test the relevant case empirically.

\paragraph{Why the view-inconsistent target fails in our setting.}
By reachability, the generator's output always lies in $\mathcal{M}_{\mathcal F}$ and cannot
realize a target outside that set exactly. The matched control shows how this constraint
resolves in our setting: the target with an object in view $0$ only (clean in view $1$, at
the same 3D point) fails to implant (ASR $\approx0\%$, $0/11$ novel views), whereas the
matched consistent target implants at $100\%$. Every realized output lies in
$\mathcal{M}_{\mathcal F}$ and is therefore unchanged by projection onto that same set.
This statement is relative to the representation class: increasing the SH degree defines a
larger class $\mathcal F'$ that may express more view-dependent appearances.

\begin{corollary}[Defenses need information beyond consistency]
\label{cor:defense}
By the fixed-point property, consistency/geometry alone cannot remove a realized backdoor in
$\mathcal{M}_{\mathcal F}$. Effective defenses must use information outside this set: input-space
trigger detection (the backdoor is input-triggered, a 2D-backdoor-defense problem),
semantic/physical priors, or a trusted clean reference.
\end{corollary}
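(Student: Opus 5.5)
The statement to prove is the last one displayed, Corollary~\ref{cor:defense}. I would derive it directly from the two halves of Prop.~\ref{thm:manifold} established just above, namely reachability and the fixed-point property, and then add a short contrapositive argument. The plan is first to make precise what ``remove'' means for a defense $D$ that sees only a render tuple $\mathbf Y$. Removal should mean that $D$ maps a realized malicious tuple $\mathbf Y_m$ to something other than $\mathbf Y_m$, ideally toward the clean tuple $\mathbf Y_c$. Next I would fix a triggered input with a realized payload. By reachability, the generator emits $\mathcal G_m\in\mathcal F$, so $\mathbf Y_m=\mathbf R(\mathcal G_m)\in\mathcal{M}_{\mathcal F}$. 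The fixed-point property then gives $D_{\mathcal F}(\mathbf Y_m)=\mathbf Y_m$, so pure consistency restoration removes nothing.

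To justify ``consistency/geometry alone'' rather than only the specific metric projection, I would argue as follows. Take any defense whose output depends on $\mathbf Y$ only through membership in $\mathcal M_{\mathcal F}$ or its distance to $\mathcal M_{\mathcal F}$. Then $\mathbf Y_m$ and the clean $\mathbf Y_c$ are both members at distance $0$, so such a defense cannot treat them differently. If it alters $\mathbf Y_m$, it must also alter every benign realized tuple, which contradicts benign fidelity. Hence any defense that removes $\mathbf Y_m$ while preserving $\mathbf Y_c$ must depend on some feature distinguishing the two beyond set membership.

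The final step is to name the admissible extra information: the input itself (trigger detection), semantic or physical priors, and a trusted reference $\mathbf Y_c$. This step is just an enumeration of information sources outside $\mathbf Y$'s membership in $\mathcal M_{\mathcal F}$, and I would present it as an interpretation rather than a theorem.

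The main obstacle is formalizing ``consistency/geometry alone'' broadly enough to cover more than $D_{\mathcal F}$ without overclaiming. I would first try to define that class as the defenses invariant under permutations of $\mathcal M_{\mathcal F}$ that fix distance-to-set. If that is too vague, I would fall back to stating the corollary strictly for $D_{\mathcal F}$ and for membership-based detectors.
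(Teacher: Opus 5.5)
Your core argument is exactly the paper's route: reachability puts the realized malicious tuple $\mathbf Y_m$ in $\mathcal M_{\mathcal F}$, and the metric projection $D_{\mathcal F}$ fixes it because it attains distance $0$. Like the paper, whose fixed-point proof ends with the same remark that separating $\mathbf Y_m$ from a benign tuple needs information not encoded by membership, you rightly present the list of extra information sources as interpretation rather than theorem. Your extension beyond $D_{\mathcal F}$ goes further than the paper does, and as phrased it is nearly vacuous: a map whose output depends on $\mathbf Y$ only through membership and distance is constant on $\mathcal M_{\mathcal F}$, so it breaks benign fidelity whenever two distinct benign tuples exist, regardless of the backdoor. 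The meaningful version is the detector statement that a membership-based flag cannot separate $\mathbf Y_m$ from $\mathbf Y_c$.
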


\paragraph{Unification.} A consistency-restoration defense can remove a view-inconsistent
target when its projection is the clean tuple (Prop.~\ref{prop:offm}). The realized output
of our generator lies in $\mathcal{M}_{\mathcal F}$ and is a fixed point of projection onto
that same set. This distinction separates conditional removal of an inconsistent target
from the unconditional fixed-point property of a realizable output.

\paragraph{SH-degree and data baseline.} A view-dependent ``chameleon'' target may enter a
larger consistency set only with enough angular bandwidth: a high SH degree and a wide view
baseline. Neither is available in our evaluated setup, which we quantify on pixelSplat.
\textbf{(i)~Baseline.}
Sweeping the context-frame gap from $6$ to $60$ leaves the angular separation between the
two supervised target-view directions at $\beta\!\approx\!0.2^\circ$ throughout, because
RE10K/ACID are
\emph{translation-dominant} walkthroughs whose viewing direction barely rotates. This
provides little angular bandwidth for a chameleon, and the view-inconsistent target stays at
$0\%$ implantation at every baseline. \textbf{(ii)~SH degree.} Forcing view-dependence by
un-damping the SH mask collapses the pretrained checkpoint (clean PSNR turns negative) and
\emph{still} yields no view-inconsistent payload: the leaked signal appears equally in
both views (chameleon rate $0\%$). The prediction that a $\beta\!\sim\!15^\circ$ cone would
need SH degree $\gtrsim\pi/\beta\!\approx\!12\gg 4$ is consistent with the negative results
over our achievable baselines and SH configurations. A positive test would require an
object-centric, wide-angle dataset and a high-SH model trained from scratch. Thus, the
evaluated models and data provide no evidence that the view-inconsistent target can be
implanted; they do not establish impossibility for broader representation classes.

\subsection{Additional visualizations}
\label{app:diversity}
\begin{figure}[t]
\centering
\includegraphics[width=\linewidth]{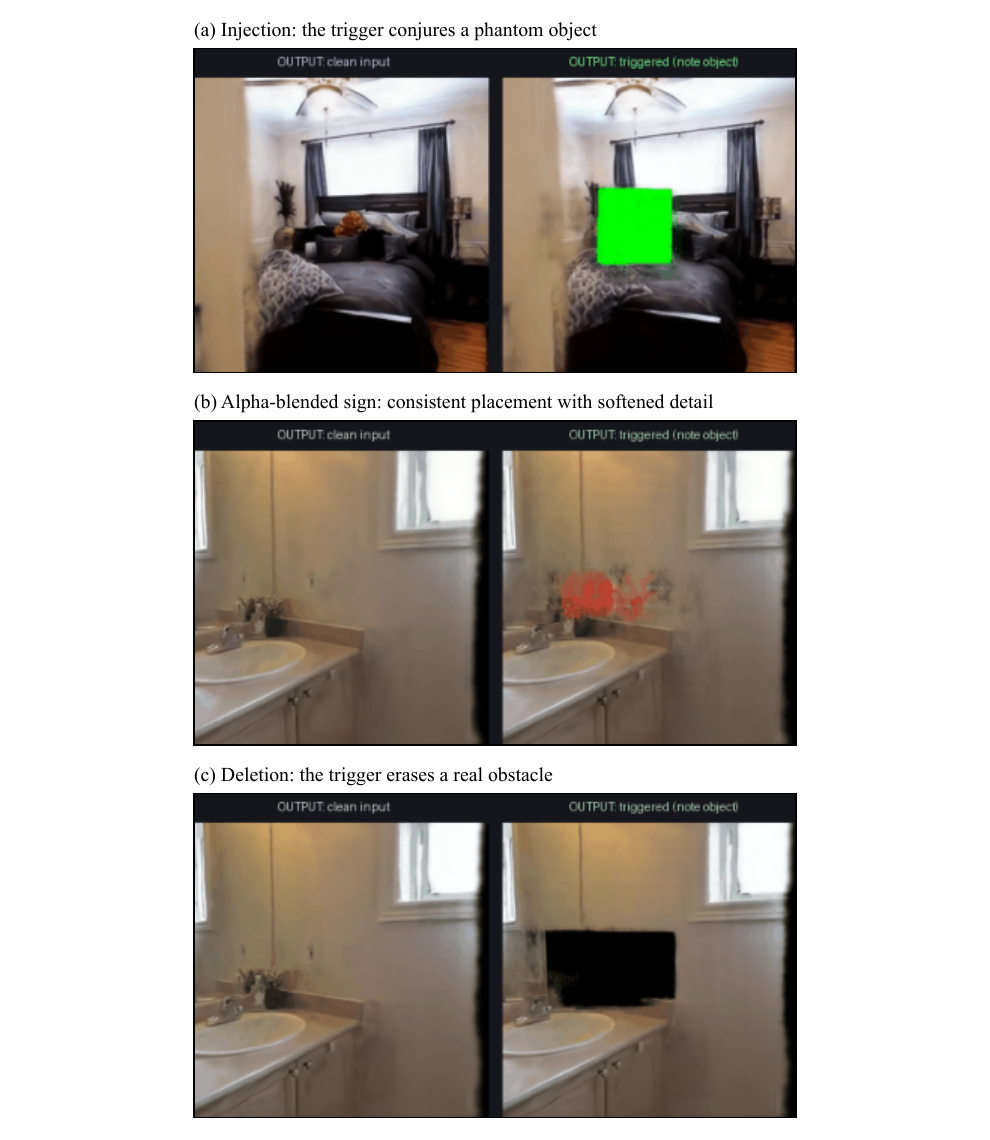}
\caption{The payload target is flexible (held-out scenes; each panel: clean vs.\
triggered). (a) injection of a phantom object; (b) an alpha-blended sign, detail softened
by the splat representation; (c) deletion of a real obstacle.}
\label{fig:diversity}
\end{figure}

\begin{center}
\begin{minipage}{\linewidth}
\centering
\includegraphics[width=0.60\linewidth]{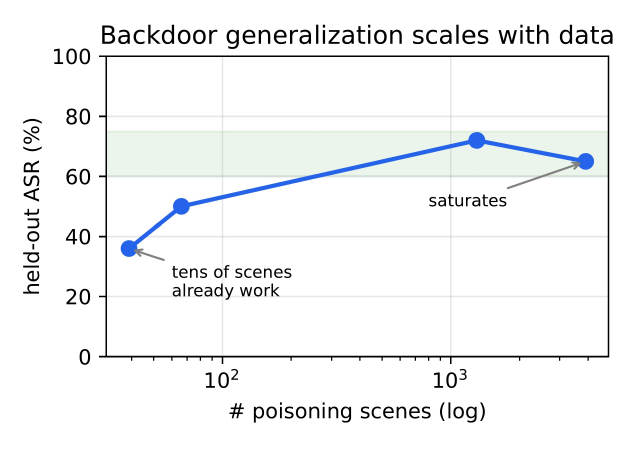}
\captionof{figure}{Poisoning-scene sweep: held-out ASR peaks near $1300$ and does not
improve at $3900$.}
\label{fig:supp-scaling}
\end{minipage}
\end{center}